\documentclass[10pt]{article}

\usepackage[utf8]{inputenc}
\usepackage{amsmath}
\usepackage{amsthm}
\usepackage{amsfonts}
\usepackage{amssymb}
\usepackage{fullpage}
\usepackage{mathtools}
\usepackage{thm-restate}
\usepackage{mleftright}
\usepackage{authblk}
\usepackage{bm}
\usepackage[linesnumbered,ruled,lined,noend]{algorithm2e}
\usepackage[colorlinks=true, linkcolor=blue, citecolor=blue,urlcolor=black]{hyperref}
\usepackage[capitalize,noabbrev]{cleveref}
\usepackage[table]{xcolor}
\usepackage{etoolbox}
\usepackage{xparse}
\usepackage{nicefrac}
\usepackage{booktabs}
\usepackage{caption}
\captionsetup[table]{skip=2pt}
\usepackage{enumitem}
\usepackage{float}
\usepackage{array}
\usepackage{pifont}
\usepackage{mdframed}

\newcounter{qst}
\crefname{qst}{Question}{Questions}

\newtheorem{definition}{Definition}
\newtheorem{lemma}{Lemma}

\newtheorem{theorem}{Theorem}
\newtheorem{corollary}{Corollary}
\newtheorem{assumption}{Assumption}

\newcommand{\vc}{\vec{c}}

\DeclareMathOperator{\reg}{Reg}

\usepackage{natbib}

\bibliographystyle{unsrtnat}

\title{Clairvoyant Regret Minimization: Equivalence with Nemirovski's Conceptual Prox Method and Extension to General Convex Games}
\author[1]{Gabriele Farina}
\author[2]{Christian Kroer}
\author[3]{Chung-Wei Lee}
\author[3]{Haipeng Luo}
\affil[1]{Meta AI}
\affil[2]{IEOR Department, Columbia University}
\affil[3]{Computer Science Department, University of Southern California}
\affil[ ]{\texttt{gfarina@fb.com, christian.kroer@columbia.edu, \{leechung, haipengl\}@usc.edu}}
\date{\today}

\makeatletter
    \patchcmd\algocf@Vline{\vrule}{\vrule \kern-0.4pt}{}{}
    \patchcmd\algocf@Vsline{\vrule}{\vrule \kern-0.4pt}{}{}
\makeatother
\SetKwComment{Hline}{}{\vspace{-3mm}\textcolor{gray}{\hrule}\vspace{1mm}}
\definecolor{darkgrey}{gray}{0.3}
\definecolor{commentcolor}{gray}{0.5}
\SetKwComment{Comment}{\color{commentcolor}[$\triangleright$\ }{}
\SetCommentSty{}
\SetNlSty{}{\color{darkgrey}}{}
\setlength{\algomargin}{4mm}
\SetKwProg{Fn}{function}{}{}
\SetKwProg{Subr}{subroutine}{}{}
\crefalias{AlgoLine}{line}%
\crefname{algocf}{Algorithm}{Algorithms}

\newcommand*{\N}{{\mathbb{N}}}
\newcommand*{\R}{{\mathbb{R}}}

\newcommand{\lip}{L}

\newcommand{\cZ}{\mathcal{Z}}

\newcommand{\cX}{\mathcal{X}}
\newcommand{\defeq}{\coloneqq}
\renewcommand{\^}[1]{^{(#1)}}
\newcommand{\bbR}{\mathbb{R}}

\DeclareMathOperator*{\argmin}{arg\,min}
\renewcommand{\vec}[1]{\bm{#1}}

\newcommand{\vz}{\vec{z}}
\renewcommand{\reg}{\varphi}

\newcommand{\range}[1]{[\![#1]\!]}

\newcommand{\vx}{\vec{x}}
\newcommand{\vg}{\vec{g}}
\newcommand{\vw}{\vec{w}}
\newcommand{\vu}{\vec{\hat z}}
\newcommand{\prox}[2]{\Pi_{#1}\mleft(#2\mright)}

\newtheorem{observation}{Observation}

\RenewDocumentCommand{\div}{omm}{\IfNoValueTF{#1}{D(#2\,\|\,#3)}{D_{#1}(#2\,\|\,#3)}}
\renewcommand{\dim}{d}

\def\[#1\]{%
  \begin{align*}%
    #1%
  \end{align*}%
}
\NewDocumentCommand{\numberthis}{om}{%
\IfNoValueTF{#1}{%
    \refstepcounter{equation}\tag{\theequation}%
  }{%
    \tag{#1}%
  }%
  \label{#2}%
}

\begin{document}

\maketitle

\begin{abstract}
    A recent paper by \citet{Piliouras21:Optimal,Piliouras22:Beyond} introduces an uncoupled learning algorithm for normal-form games---called Clairvoyant MWU (CMWU). In this note we show that CMWU is equivalent to the \textit{conceptual prox method} described by \citet{Nemirovski04:Prox}. This connection immediately shows that it is possible to extend the CMWU algorithm to any convex game, a question left open by Piliouras et al. We call the resulting algorithm---again equivalent to the conceptual prox method---\textit{Clairvoyant OMD}. At the same time, we show that our analysis yields an improved regret bound compared to the original bound by Piliouras et al., in that the regret of CMWU scales only with the square root of the number of players, rather than the number of players themselves.
\end{abstract}

\section{Introduction}

A recent line of work has focused on identifying learning algorithms such that, when used by all players in a game, each player's regret grows polylogarithmically in the number of repetitions $T$, improving over the traditional (and unimprovable) $O(\sqrt{T})$ bounds\footnote{For simplicity, in the introduction our $O(\cdot)$ notation hides all parameters independent of $T$.} of no-regret algorithms for the more adversarial setting in which no assumption about the algorithm used by other agents is made (see, \emph{e.g.}, \citet{Daskalakis21:Fast,Farina22:Near,Daskalakis21:Near,Anagnostides21:Near,Farina22:Kernelized,Chen20:Hedging,Foster16:Learning,Syrgkanis15:Fast,Rakhlin13:Optimization,Daskalakis11:Near}). By leveraging well-known connections between regret and equilibria in games (\emph{e.g.}, \citet{Freund99:Adaptive, Hart00:Simple,Foster97:Calibrated,Roughgarden15:Intrinsic}), such learning algorithms can then be used as computational approaches to equilibrium finding, leading to $\tilde{O}(1/T)$ convergence to coarse correlated equilibria. 
This reduction from no-regret learning to equilibrium computation is largely the preferred approach in theory and practice~\citep{Bowling15:Heads,Brown17:Superhuman,Anagnostides22:Faster,Farina21:Better}.

In a recent work, \citet{Piliouras21:Optimal} depart from this no-regret learning perspective, by introducing a new algorithm that they call \emph{Clairvoyant MWU} (CMWU) for normal-form games. 
CMWU is a variant of the popular multiplicative weights updates (MWU) algorithm, where action probabilities are scaled exponentially according to their  payoff at each time step. In CMWU this scaling is assumed to be done at time $t$ with respect to a tight approximation of the payoff that the player will see \emph{at that same time $t$} (hence the adjective \emph{clairvoyant}). 
In CMWU, the iterates generated by the algorithm are not known to yield regret bounds, and as a decentralized protocol the CMWU dynamics require the players to coordinate on repeatedly computing a fixed point over a sequence of iterations. However, the CMWU dynamics, whether centralized or decentralized, are attractive as a method for computing a coarse correlated equilibrium (CCE). In particular, they yield a very competitive $O(\log T / T)$ rate of convergence, while requiring only a single gradient computation as well as a linear-time closed-form strategy update at every iteration.

\begin{table}[t]\centering
    \def\clap#1{\hbox to 0pt{\hss#1\hss}}
    \newcommand{\yes}{\ding{51}}%
    \newcommand{\no}{\textcolor{black!100}{\ding{55}}}%
    \scalebox{.945}{\begin{tabular}{m{4.3cm}m{5.3cm}>{\centering\arraybackslash}m{2.2cm}>{\centering\arraybackslash}m{2.0cm}>{\centering\arraybackslash}m{1.5cm}}
        \toprule
        \textbf{Algorithm} & \textbf{Operations required to\newline compute an $\epsilon$-CCE in NFGs} & \textbf{Are iterates known to be no-regret?} & \textbf{Generalizes to convex games?} & \textbf{Leads to CE in NFGs?}
        \\
        \midrule
        Optimistic MWU\newline\citep{Daskalakis21:Near} & $\displaystyle O\mleft(n\,d\log d\cdot \frac{1}{\epsilon} \log^4 \frac{1}{\epsilon}\mright)$ & \yes & \no & \no\\[4mm]
        BM-OFTRL-LogBar \newline\citep{Anagnostides22:Uncoupled} & $\displaystyle O\mleft(n\, \mathrm{poly}(d)\cdot \frac{1}{\epsilon}\log\mleft(\frac{1}{\epsilon}\mright)\log\log\frac{1}{\epsilon} \mright)$ & \yes & \no & \yes\\[4mm]
        LRL-OFTRL\newline\citep{Farina22:Near} & $\displaystyle O\mleft(n\, \mathrm{poly}(d)\cdot \frac{1}{\epsilon}\log\mleft(\frac{1}{\epsilon}\mright)\log\log\frac{1}{\epsilon} \mright)$ & \yes & \yes & \no\\[4mm]
        Clairvoyant MWU\newline\citep{Piliouras22:Beyond} & $\displaystyle O\mleft(n \,d \log(d)\cdot \frac{1}{\epsilon} \log \frac{1}{\epsilon}\mright)$ & ~~~~~\no\newline\small\clap{(but a subset is)} & \no & \no\\[4mm]
        Clairvoyant OMD\newline[\textcolor{purple}{this note}] & $\displaystyle O\mleft(\sqrt{n} \,d \log(d)\cdot \frac{1}{\epsilon} \log \frac{1}{\epsilon}\mright)$ & ~~~~~\no\newline\small\clap{(but a subset is)} & \yes & \no\\[4mm]
        \bottomrule
    \end{tabular}}
    \caption{Comparison of existing learning-based methods for computing an $\epsilon$-approximate coarse-correlated equilibrium (CCE) in a generic normal-form game (NFG) with $n$ players, $d$ total actions, and utilities in $[0,1]$, for values of $\epsilon = O(1/d)$. We use the notation $\mathrm{poly}(d)$ to hide polynomial terms in $d$ of order at least $2$. We remark that BM-OFTRL-LogBar is an algorithm designed to minimize the more challenging notion of \emph{swap regret}, and guarantees convergence to the set of correlated equilibrium (CE) in normal-form games.}
    \label{tab:comparison}
\end{table}

In this note, we show that the CMWU algorithm can be viewed as an instantiation of the \emph{conceptual prox method} (CPM), which has been studied extensively in the first-order methods literature~\citet{Kiwiel97:Proximal,Chen93:Convergence,Nemirovski04:Prox}. \citet{Nemirovski04:Prox} discusses the CPM as a conceptual algorithm that achieves an $O(1/T)$ rate of convergence to a solution to a monotone variational inequality (VI). Monotone VI generalizes for example the problem of computing a two-player zero-sum Nash equilibrium.
He labels it a conceptual method because each step of the algorithm requires solving a non-trivial variational inequality, which means that it is not clear that the algorithm is implementable. He then shows, however, that an approximate solution to this VI can be computed in logarithmic time in the required precision, because the VI corresponds to a solution to a fixed point of a mapping which is a contraction. Finally, he goes on to show that in fact one can get the same rate by performing only two steps of the contraction. This results in the famous \emph{mirror prox} algorithm.
What we show is that the CMWU algorithm can be viewed as a specialization of CPM to the setting where the feasible set is the Cartesian product of the player's strategy spaces, and the operator used in the variational inequality is the gradient operator for each player.
However, here one must depart from the CPM perspective of Nemirovski, because his paper focuses on \emph{monotone} operators. This is because his method ultimately requires computing the average of the CPM iterates, in order to achieve convergence. For that reason, CMWU, and our generalization, is not strictly covered by Nemirovski's results.
The key insight is to realize that if we are only interested in the regret of the players, then no averaging is needed, in which case one can show that the CPM method achieves constant regret, even in the case of a non-monotone operator, as is the case for the operator associated to general-sum games.

Using our perspective on CPM as a method for computing a sequence of strategies with low regret, we show that it is possible to generalize the CMWU method, which only applies to normal-form games, to games with arbitrary convex and compact decision sets, and utility functions that are concave with bounded gradients that are also Lipschitz continuous.
This answers an open problem of \citet{Piliouras21:Optimal}, where they ask whether CMWU could be generalized to exactly such a setting.
We call the resulting algorithm \emph{Clairvoyant Online Mirror Descent} (COMD).
We stress that while this particular result for the algorithm is new, the algorithm is really an instantiation of the CPM of \citet{Nemirovski04:Prox}.
By appealing to the contraction argument of \citet{Nemirovski04:Prox}, we also show that it is possible to improve the regret bound of CMWU in the case of normal-form games: we improve the dependence on the number of players $n$ from being linear to only $\sqrt{n}$.

Finally, we go on to develop concrete bounds for the COMD method in the case where only a finite number of steps of the contraction are performed. We show that in this case one can perform $O(\log t)$ steps of the contraction at iteration $t$, while retaining the guarantee of constant regret.
An immediate consequence of our result is that by instantiating this approximate variant of COMD with a dilated entropy regularizer~\citep{Hoda10:Smoothing,Kroer20:Faster,Farina21:Better}, we get the first algorithm that has an $O(\log T / T)$ rate of convergence to a normal-form coarse correlated equilibrium for extensive-form games, while requiring only one gradient computation and a linear-time update at every iteration.

We summarize the major strengths and weaknesses of CMWU and COMD compared to prior learning-based algorithms to compute CCE in normal-form (and, in some cases, general convex games) in \cref{tab:comparison}.

\section{Setting}

In this section we recall the definition of our setting, convex games, as well as some of the basic properties of proximal operators.

\subsection{Convex Games and Variational Inequality}

We let $\range{n} \defeq \{1, 2, \dots, n\}$ be a set of players, with $n \in \N \defeq \{1, 2, \dots\}$. In this note, we operate on \emph{convex games}, whereby each player $i \in \range{n}$ has a nonempty convex and compact set of strategies $\cX_i \subseteq \R^{d_i}$. For a \emph{joint strategy profile} $\vec{x} = (\vec{x}_1, \dots, \vec{x}_n) \in \bigtimes_{j=1}^n \cX_j$, the reward of player $i$ is given by a differentiable concave utility function $u_i : \bigtimes_{j=1}^n \cX_j \to \R$, subject to the following standard assumptions:
\begin{enumerate}%
          \item (Concavity) $u_i(\vec{x}_i, \vec{x}_{-i})$ is \emph{concave} in $\vec{x}_i$ for any $\vec{x}_{-i} = (\vec{x}_1, \dots, \vec{x}_{i-1}, \vec{x}_{i+1}, \dots, \vec{x}_n) \in \bigtimes_{j \neq i} \cX_j$;
          \item (Bounded gradients) for any $(\vec x_1, \dots, \vec x_n) \in \bigtimes_{j=1}^n \cX_j$, $\nabla_{\vec x_i} u_i(\vec x_1, \dots, \vec x_n)$ is bounded,
          \item ($\lip$-smoothness) The gradient $\nabla_{\vx_i} u_i$ is Lipschitz-continuous.
\end{enumerate}

In the rest of the note, we will often find it beneficial to view strategy updates in the game as \emph{global}, that is, operating on all players at the same time rather than each player individually. For that reason, we now introduce notation to operate on the Cartesian product of all strategy spaces. First, we denote the sum of dimensions of the strategy spaces of the players with the letter $d \defeq d_1 + \dots + d_n$. The joint strategy space of the game is $\cZ \defeq \cX_1 \times \dots \times \cX_n$, and we will consistently denote elements in $\cZ$ using the letter $\vz$ or variants thereof. Given a vector $\vz \in \bbR^d$, we will denote as $\vz_i \in \bbR^{d_i}$ the portion of the vector belonging to player $i$, that is, we let $(\vz_1, \dots, \vz_n) \in \bbR^{d_1} \times \dots \times \bbR^{d_n}$ be the (unique) vectors such that $\vz = (\vz_1, \dots, \vz_n)$.

In this global notation over the game, a key quantity associated with the game is the vector-valued function mapping strategies to payoff gradients for all players, that is,
\[
    F: \cZ \to \bbR^d,\qquad F(\vz) \defeq \begin{pmatrix} -\nabla_{\vx_1} u_1(\vz) \\ \vdots \\ -\nabla_{\vx_n} u_n(\vz)\end{pmatrix}.
\]
When viewed through the global lenses of the function $F$, the properties listed above imply the following:
\begin{assumption}\label{assumption:smooth}
    For an appropriate primal-dual norm pair $(\|\cdot\|, \|\cdot\|_*)$ over $\bbR^d$, the game operator $F : \cZ \to \bbR^d$ satisfies:
    \begin{itemize}
        \item (Boundedness) $\|F(\vz)\|_* \le B$ for all $\vz \in \cZ$.
        \item (Lipschitz continuity) There exists $\lip > 0$ such that $\|F(\vz) - F(\vz')\|_{*} \le \lip \|\vz - \vz'\|$.
    \end{itemize}
\end{assumption}

The variational inequality associated with the operator $F$, that is, the problem of finding $\vz \in \cZ$ such that
\[
    \langle F(\vz), \vz' - \vz\rangle \ge 0 \qquad\forall \vz' \in \cZ
    \numberthis{eq:vi F}
\]
is exactly equivalent to the problem of computing a Nash equilibrium of the game (see, \emph{e.g.}, \citet[Proposition 1.4.2]{Facchinei03:Finite}). We remark that generally $F$ is  \emph{not} a monotone operator, that is, there might exist $\vz,\vz'\in\cZ$ such that
$
    \langle F(\vz) - F(\vz'), \vz - \vz'\rangle \not\ge 0.
$
Nonetheless, in this note we will be concerned with applying the conceptual prox method, which was designed for monotone operators, to the variational inequality~\eqref{eq:vi F}.

\subsection{Proximal Setup}
In this section we review some standard objects and notations that relate to proximal methods. For each player $i \in \range n$, we assume that a strongly convex regularizer $\reg_i : \cZ \to \bbR$ has been chosen. Each regularizer $\reg_i$ induces a generalized notion of distance---called \emph{Bregman divergence}---over $\cX_i$, defined as
\[
    \div[i]{\cdot}{\cdot} : \cX_i \times \cX_i \to \bbR_{\ge 0}, \qquad \div[i]{\vx}{\vx'} \defeq \reg_i(\vx) - \reg_i(\vx') - \langle\nabla \reg_i(\vx'), \vx - \vx'\rangle.
\]

We combine the regularizer $\reg_i$ for each player's strategy space into a \emph{global}, composite regularizer
\[
    \reg : \cZ \to \bbR, \qquad \reg : \vz \mapsto \reg_i(\vz_i) + \dots + \reg_n(\vz_n).
\]
Correspondingly, the Bregman divergence induced by $\reg$ is the function
\[
    \div{\cdot}{\cdot} &: \cZ \times \cZ \to \bbR_{\ge 0},\\ \div{\vz}{\vz'} &\defeq \reg(\vx) - \reg(\vz') - \langle \nabla \reg(\vz'), \vz - \vz'\rangle = \div[1]{\vz_1}{\vz'_1} + \dots + \div[n]{\vz_n}{\vz'_n}.
\]
It is easy to show that as long as each $\reg_i$ is strongly convex, then so is $\reg$. Specifically, in the rest of the note we operate under the following assumption.
\begin{assumption}\label{assumption:sc}
    The per-player regularizers $\reg_i$ are chosen so that the global regularizer $\reg : \cZ \to \bbR_{\ge 0}$ is $1$-strongly convex with respect to the norm $\|\cdot\| : \cZ \to \bbR$ for which \cref{assumption:smooth} holds.
\end{assumption}

With that, we are able to define the \emph{prox operator}, which we define for the global space $\cZ$ and global regularizer $\reg$. Given a \emph{center} $\vz \in \cZ$ and a gradient $\vg \in \bbR^\dim$, the prox operator $\prox{\vz}{\vg}$ generalizes the notion of gradient step away from $\vz$ in the direction of $-\vg$, and is defined as follows.

\begin{definition}[Prox operator]
    The prox operator associated with $\reg$ is defined as
    \[
        \prox{\vz}{\vg} = \argmin_{\vu \in \cZ} \mleft\{ \vg^\top \vu + \div{\vu}{\vz} \mright\} = \argmin_{\vu \in \cZ} \mleft\{ \langle \vg - \nabla \reg(\vz), \vu\rangle + \reg(\vu)\mright\}
        \numberthis{eq:prox def}
    \]
    for any center $\vz \in \cZ$ and a vector $\vg \in \bbR^\dim$.
\end{definition}

The following properties of proximal operators are standard in the literature (see, \emph{e.g.}, \citet{Nemirovski04:Prox}). 

\begin{restatable}[The prox operator is Lipschitz continuous]{lemma}{lemproxlipschitz}\label{lem:prox is lipschitz}
    Given any center $\vz \in \cZ$, the prox operator is Lipschitz continuous with constant $1$:
    \[
        \|\prox{\vz}{\vg} - \prox{\vz}{\vg'}\| \le \|\vg - \vg'\|_*     \qquad\quad\forall\, \vg, \vg' \in \bbR^\dim.
    \]
    We remark that the above inequality uses the primal-dual norm pair for which $\reg$ is $1$-strongly convex, which is the same primal-dual norm pair for which \cref{assumption:smooth} holds, as per \cref{assumption:sc}.
\end{restatable}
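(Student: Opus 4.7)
The plan is to invoke the first-order optimality conditions for the two prox outputs, combine them into a single inequality, and then exploit $1$-strong convexity of the global regularizer $\psi$ together with Hölder's inequality to conclude.

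First I would set $\vu \defeq \prox{\vz}{\vg}$ and $\vu' \defeq \prox{\vz}{\vg'}$. Since $\vu$ and $\vu'$ minimize strictly convex objectives of the form $\langle \vg - \nabla\psi(\vz),\,\cdot\,\rangle + \psi(\cdot)$ over the convex set $\cZ$, the first-order optimality conditions read
\[
    \langle \vg - \nabla \psi(\vz) + \nabla \psi(\vu),\, \vu' - \vu\rangle &\ge 0, \\
    \langle \vg' - \nabla \psi(\vz) + \nabla \psi(\vu'),\, \vu - \vu'\rangle &\ge 0,
\]
where I have instantiated the test point to be the other prox output, so that the $-\nabla\psi(\vz)$ terms align nicely when the two inequalities are added.

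Next I would add the two inequalities, which cancels the $\nabla\psi(\vz)$ terms and yields
\[
    \langle \vg - \vg',\, \vu' - \vu\rangle \ge \langle \nabla \psi(\vu') - \nabla \psi(\vu),\, \vu' - \vu\rangle.
\]
The right-hand side is lower bounded by $\|\vu' - \vu\|^2$ using the $1$-strong convexity of $\psi$ from \cref{assumption:sc}. The left-hand side is upper bounded, via Hölder's inequality with the primal-dual norm pair, by $\|\vg - \vg'\|_*\,\|\vu' - \vu\|$. Chaining the two bounds gives
\[
    \|\vg - \vg'\|_*\,\|\vu' - \vu\| \ge \|\vu' - \vu\|^2,
\]
after which dividing by $\|\vu' - \vu\|$ (trivially handling the degenerate case $\vu = \vu'$) delivers the claimed Lipschitz bound.

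I do not anticipate real obstacles here: this is a classical computation and all three ingredients (optimality conditions, strong convexity, Hölder) are immediately available. The only mild care point is making sure the cross-application of the optimality conditions is legal, which it is because both $\vu$ and $\vu'$ lie in $\cZ$ and can thus serve as test points for each other's variational inequality.
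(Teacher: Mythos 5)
Your proof is correct and complete: the two cross-applied optimality conditions (which are exactly \cref{lem:vi for prox} instantiated at the other prox output), their sum, the monotonicity consequence of $1$-strong convexity from \cref{assumption:sc}, and the generalized Cauchy--Schwarz inequality for the primal-dual norm pair are precisely the standard argument. The paper itself does not spell out a proof---it declares the lemma standard and cites \citet{Nemirovski04:Prox}---so your write-up simply supplies the classical derivation that the authors are implicitly relying on, and it does so without gaps, including the degenerate case $\vu=\vu'$.
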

\begin{lemma}\label{lem:vi for prox}
    $\prox{\vz}{\vg} = \vz^*$ if and only if $\langle \vg - \nabla \reg(\vz) + \nabla \reg(\vz^*), \vu - \vz^*\rangle \ge 0$ for all $\vu \in \cZ$. 
\end{lemma}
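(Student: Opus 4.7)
The plan is to recognize this statement as the standard first-order optimality condition (variational inequality characterization) for the convex minimization problem defining the prox operator, then just verify the calculation explicitly.

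First I would use the second form of the prox operator given in~\eqref{eq:prox def}, namely
\[
    \prox{\vz}{\vg} = \argmin_{\vu \in \cZ} \mleft\{ f(\vu) \mright\}, \qquad f(\vu) \defeq \langle \vg - \nabla \reg(\vz), \vu\rangle + \reg(\vu).
\]
Because $\reg$ is $1$-strongly convex by \cref{assumption:sc} and the linear term does not affect strict convexity, $f$ is strictly convex on the convex set $\cZ$, so it has a unique minimizer, which means $\prox{\vz}{\vg}$ is well-defined and the claim ``$\prox{\vz}{\vg} = \vz^*$'' is unambiguous.

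Next I would invoke the classical first-order optimality condition for differentiable convex minimization over a convex set: a point $\vz^* \in \cZ$ minimizes $f$ on $\cZ$ if and only if $\langle \nabla f(\vz^*), \vu - \vz^*\rangle \ge 0$ for all $\vu \in \cZ$. Computing the gradient gives $\nabla f(\vu) = \vg - \nabla \reg(\vz) + \nabla \reg(\vu)$, so evaluating at $\vz^*$ yields exactly
\[
    \langle \vg - \nabla \reg(\vz) + \nabla \reg(\vz^*), \vu - \vz^*\rangle \ge 0 \qquad \forall\, \vu \in \cZ,
\]
which is the stated condition.

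There is really no hard step here: the main content is the standard convex-optimization optimality criterion, and the only thing to check is that we may apply it (which requires differentiability of $\reg$ and convexity of $\cZ$, both assumed). If one wanted to be self-contained about the ``only if'' direction, one would take any $\vu \in \cZ$, set $\vu_\lambda \defeq \vz^* + \lambda(\vu - \vz^*) \in \cZ$ for $\lambda \in (0,1]$, use $f(\vu_\lambda) \ge f(\vz^*)$, divide by $\lambda$, and let $\lambda \to 0^+$ to obtain $\langle \nabla f(\vz^*), \vu - \vz^*\rangle \ge 0$; conversely, for the ``if'' direction, convexity of $f$ gives $f(\vu) \ge f(\vz^*) + \langle \nabla f(\vz^*), \vu - \vz^*\rangle \ge f(\vz^*)$, so $\vz^*$ is a minimizer, and uniqueness (from strict convexity) identifies it with $\prox{\vz}{\vg}$.
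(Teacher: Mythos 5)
Your proof is correct and matches the paper's treatment: the paper offers no separate proof of this lemma, simply remarking that it "states the (necessary and sufficient) first-order optimality condition" for the minimization problem in~\eqref{eq:prox def}, which is exactly the standard argument you spell out (compute $\nabla f$, apply the variational-inequality characterization of minimizers of a convex function over a convex set). Your additional observations about well-definedness via strong convexity and the self-contained verification of both directions are fine but not needed beyond the standard citation.
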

\cref{lem:vi for prox} simply states the (necessary and sufficient) first-order optimality condition for the prox operator problem defined in \eqref{eq:prox def}. It immediately implies the following.

\begin{corollary}\label{cor:vi for prox}
    For all $\vz \in \cZ$ and $\vg$, one has
    \[
        \div{\vu}{\prox{\vz}{\vg}} - \div{\vu}{\vz} + \div{\prox{\vz}{\vg}}{\vz} \le \langle \vg, \vu - \prox{\vz}{\vg}\rangle \qquad\forall \vu \in \cZ.
    \]
\end{corollary}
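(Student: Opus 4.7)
The plan is to unfold the definition of the prox operator via the first-order optimality condition in \cref{lem:vi for prox}, and then show that the left-hand side of the claimed inequality is exactly the expression appearing in the variational inequality, via a standard three-point identity for Bregman divergences.

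Concretely, let $\vz^{\star} \defeq \prox{\vz}{\vg}$. Then \cref{lem:vi for prox} yields
\[
    \langle \vg - \nabla \reg(\vz) + \nabla \reg(\vz^{\star}),\, \vu - \vz^{\star}\rangle \ge 0 \qquad\forall \vu \in \cZ,
\]
which I rewrite as $\langle \vg, \vu - \vz^{\star}\rangle \ge \langle \nabla \reg(\vz) - \nabla \reg(\vz^{\star}),\, \vu - \vz^{\star}\rangle$. This is the only ``non-routine'' input—everything else is algebra on the definition of $\div{\cdot}{\cdot}$.

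Next I will establish the three-point identity
\[
    \div{\vu}{\vz^{\star}} - \div{\vu}{\vz} + \div{\vz^{\star}}{\vz} = \langle \nabla \reg(\vz) - \nabla \reg(\vz^{\star}),\, \vu - \vz^{\star}\rangle,
\]
by expanding each Bregman term according to its definition $\div{\vec a}{\vec b} = \reg(\vec a) - \reg(\vec b) - \langle \nabla \reg(\vec b), \vec a - \vec b\rangle$. The $\reg(\vu)$ contributions cancel between $\div{\vu}{\vz^{\star}}$ and $\div{\vu}{\vz}$; the $\reg(\vz^{\star})$ contributions cancel between $\div{\vu}{\vz^{\star}}$ and $\div{\vz^{\star}}{\vz}$; the $\reg(\vz)$ contributions cancel between $\div{\vu}{\vz}$ and $\div{\vz^{\star}}{\vz}$; and the linear terms $\langle \nabla \reg(\vz), \vu - \vz\rangle - \langle \nabla \reg(\vz), \vz^{\star}-\vz\rangle = \langle \nabla \reg(\vz), \vu - \vz^{\star}\rangle$ collapse to give exactly the claimed identity.

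Combining the two displays yields $\langle \vg, \vu - \vz^{\star}\rangle \ge \div{\vu}{\vz^{\star}} - \div{\vu}{\vz} + \div{\vz^{\star}}{\vz}$ for every $\vu \in \cZ$, which is the statement of the corollary. There is no real obstacle: the entire proof is two lines of book-keeping once the three-point identity is written down, and the latter is a direct expansion that I would carry out inline.
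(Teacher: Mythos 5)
Your proposal is correct and follows essentially the same route as the paper's proof: both reduce the claimed inequality, via expansion of the Bregman divergences (the three-point identity you write out explicitly, which the paper compresses into one line), to the first-order optimality condition of \cref{lem:vi for prox}. The only difference is presentational—you verify the cancellations in more detail—so there is nothing to add.
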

\begin{proof}
    Expanding the definition of the Bregman divergence, the statement is equivalent to
    \[
        \langle \nabla \reg(\vz) - \nabla \reg(\prox{\vz}{\vg}), \vu - \prox{\vz}{\vg}\rangle \le \langle \vg, \vu - \prox{\vz}{\vg}\rangle,
    \]
    which in turn is equivalent to
    \[
        \langle \vg - \nabla \reg(\vz) + \nabla \reg(\prox{\vz}{\vg}), \vu - \prox{\vz}{\vg}\rangle \ge 0.
    \]
    Applying \cref{lem:vi for prox} yields the statement.
\end{proof}

\subsection{Example: Normal-Form Games}\label{sec:nf}
A notable class of convex games are \emph{normal-form} games. In normal-form games, each player $i\in\range n$ has a finite set of actions of cardinality $d_i$. For any possible combination of actions of the players, each player receives a utility, which we assume to be in the range $[-V, V]$. Players are free to select any distribution over their actions as their strategy, that is, a point in the probability simplex
\[
    \cX_i = \Delta^{d_i} \defeq \{\vx \in \bbR^{d_i}_{\ge 0}: \vec{1}^\top \vx = 1\}.
\]
Each player's utility as a function of the choice of distributions is the expected utility corresponding to actions sampled for those distributions.

We use the following primal-dual norm pair $(\|\cdot\|_\Delta, \|\cdot\|_{*\Delta})$ on the Cartesian product space $\bbR^d$
\[
    \|\cdot\|_\Delta : \bbR^d \to \bbR_{\ge 0}, &\qquad \vz \mapsto \sqrt{\|\vz_1\|_1^2 + \dots + \|\vz_n\|_1^2},\\
    \|\cdot\|_{*\Delta} : \bbR^d \to \bbR_{\ge 0}, &\qquad \vz \mapsto \sqrt{\|\vz_1\|_\infty^2 + \dots + \|\vz_n\|_\infty^2}.
\]
For this choice of norms, it is immediate to check that the operator $F$ satisfies \cref{assumption:smooth} for the choice $B \defeq \sqrt{n}\, V$ and $L \defeq \sqrt{n}\, V$. 

A standard choice of regularization for normal-form games is \emph{negative entropy}, that is, the regularizer
\[
    \reg_i : \Delta^{d_i} \ni \vx \mapsto \sum_{j = 1}^{d_i} \vx[j] \log \vx[j]
\]
for all player $i \in \range n$. Negative entropy is $1$-strongly convex with respect to the $\ell_1$ norm, and therefore the composite regularizer $\reg = \reg_1 + \dots + \reg_n$ is $1$-strongly convex with respect to the norm $\|\cdot\|_\Delta$ introduced above, in compliance with \cref{assumption:sc}. Furthermore, we remark that for any player $i$, the uniform strategy $\vc_i \defeq (1/d_i, \dots, 1/d_i)$ satisfies 
\[
    \div[i]{\vx}{\vc_i} \le \log d_i\qquad\forall\vx\in\cX_i=\Delta^{d_i}.
    \numberthis{eq:range Di nf}
\]

\section{Conceptual Prox Method}

The key observation underpinning the \emph{conceptual prox method} is the following straightforward lemma.

\begin{lemma}\label{lem:central}
    Let $t \in \N$ and $\vz\^{t-1}\in \cZ$ be arbitrary. If the point $\vz\^t$ satisfies the fixed point equation: 
    \[
        \vz\^t = \prox{\vz\^{t-1}}{\eta F(\vz\^t)},
    \]
    then
    \[
        \eta\,\langle F(\vz\^t), \vu - \vz\^t \rangle \ge \div{\vu}{\vz\^t} - \div{\vu}{\vz\^{t-1}} + \div{\vz\^t}{\vz\^{t-1}} \qquad \forall \vu \in \cZ,
        \numberthis{eq:central global}
    \]
    and in particular, for all players $i$,
    \[
        \eta\,\langle \nabla_{\vx_i} u_i(\vz\^t), \vu_i - \vz\^t_i\rangle \le -\div[i]{\vu_i}{\vz\^t_i} + \div[i]{\vu_i}{\vz\^{t-1}_i} - \div[i]{\vz\^t_i}{\vz\^{t-1}_i} \qquad\forall \vu_i \in \cX_i.
        \numberthis{eq:central per player}
    \]
\end{lemma}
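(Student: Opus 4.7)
The global inequality \eqref{eq:central global} is essentially an immediate consequence of \cref{cor:vi for prox} applied to the fixed-point defining $\vz\^t$. Specifically, my plan is to invoke \cref{cor:vi for prox} with the choices $\vz \leftarrow \vz\^{t-1}$ and $\vg \leftarrow \eta F(\vz\^t)$. Since by hypothesis $\prox{\vz\^{t-1}}{\eta F(\vz\^t)} = \vz\^t$, the corollary directly yields
\[
    \div{\vu}{\vz\^t} - \div{\vu}{\vz\^{t-1}} + \div{\vz\^t}{\vz\^{t-1}} \le \langle \eta F(\vz\^t), \vu - \vz\^t\rangle \qquad\forall \vu \in \cZ,
\]
which is \eqref{eq:central global}. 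Note that nothing about monotonicity of $F$ is used; the only role of $F(\vz\^t)$ here is to serve as a fixed gradient input to the prox operator, so the argument goes through even for non-monotone operators.

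\textbf{From global to per-player.} For \eqref{eq:central per player}, the plan is to take advantage of the product structure $\cZ = \cX_1 \times \dots \times \cX_n$ and the additive decomposition of the regularizer $\reg = \reg_1 + \dots + \reg_n$, which gives $\div{\vz}{\vz'} = \sum_{j=1}^n \div[j]{\vz_j}{\vz'_j}$. Given a player $i$ and any $\vu_i \in \cX_i$, I would specialize the global inequality \eqref{eq:central global} to the ``coordinate-wise'' test point
\[
    \vu \defeq (\vz\^t_1, \dots, \vz\^t_{i-1}, \vu_i, \vz\^t_{i+1}, \dots, \vz\^t_n) \in \cZ,
\]
which differs from $\vz\^t$ only in the $i$-th block. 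On the right-hand side of \eqref{eq:central global}, all terms $\div[j]{\,\cdot\,}{\,\cdot\,}$ with $j \neq i$ appearing in $\div{\vu}{\vz\^t}$ vanish, while the $j \neq i$ contributions to $\div{\vu}{\vz\^{t-1}}$ and $\div{\vz\^t}{\vz\^{t-1}}$ are identical and therefore cancel, leaving only the $i$-th block terms. On the left-hand side, the decomposition $F(\vz\^t) = (-\nabla_{\vx_1} u_1(\vz\^t), \dots, -\nabla_{\vx_n} u_n(\vz\^t))$ combined with the fact that $\vu - \vz\^t$ has all blocks zero except the $i$-th collapses the inner product to $-\eta\langle\nabla_{\vx_i} u_i(\vz\^t), \vu_i - \vz\^t_i\rangle$. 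Multiplying the resulting inequality by $-1$ flips the direction and produces \eqref{eq:central per player}.

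\textbf{Difficulty.} There is no real obstacle: once the first-order optimality condition for the prox operator (\cref{lem:vi for prox}, and consequently \cref{cor:vi for prox}) is in hand, the global statement \eqref{eq:central global} is just a substitution, and \eqref{eq:central per player} follows from the Cartesian/separable structure of $\cZ$, $\reg$, and $F$. The only bookkeeping step that deserves care is the cancellation of the $j \neq i$ Bregman terms when $\vu$ agrees with $\vz\^t$ on those blocks; beyond that, the derivation is mechanical.
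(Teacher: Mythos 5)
Your proof is correct and follows exactly the paper's own argument: apply \cref{cor:vi for prox} with $\vg = \eta F(\vz\^t)$ and $\vz = \vz\^{t-1}$ to get \eqref{eq:central global}, then specialize $\vu$ to the test point differing from $\vz\^t$ only in block $i$ and use the separable structure of $\reg$ and $F$ to get \eqref{eq:central per player}. Your write-up just spells out the block-wise cancellation that the paper leaves as ``expanding the definitions.''
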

\begin{proof}
    Take any $\vu \in \cZ$ and use \cref{cor:vi for prox} with $\vg = \eta F(\vz\^{t})$ and $\vz = \vz\^{t-1}$:
    \[
        \div{\vu}{\vz\^t} - \div{\vu}{\vz\^{t-1}} + \div{\vz\^t}{\vz\^{t-1}} &\le \eta\,\langle F(\vz\^{t}), \vu - \vz\^t\rangle,
    \]
    which is exactly~\eqref{eq:central global}. Since the inequality holds for any $\vu \in \cZ$, it holds in particular for any vector of the form $\vu = (\vz_1\^{t}, \dots, \vz_{i-1}\^t, \vu_i, \vz\^t_{i+1}, \dots, \vz\^t_n) \in \cZ$. Substituting this particular choice into~\eqref{eq:central global} and expanding the definitions yields~\eqref{eq:central per player}.
\end{proof}

By noting that the right-hand side of \eqref{eq:central per player} is telescopic, \cref{lem:central} immediately implies the following.

\begin{corollary}[Constant per-player regret]\label{cor:central}
    Let $\vz\^0 \in \cZ$ be arbitrary, and suppose that recursively $\vz\^t \in \cZ$ at all times $t \in \N$ satisfies the following fixed-point equation
    \[
        \vz\^t = \prox{\vz\^{t-1}}{\eta F(\vz\^t)}. \numberthis[$\star$]{eq:fp}
    \]
    Then, at all times $T \in \N$, the per-player regret is upper bounded as
    \[
        \mathrm{Reg}^T_i &\defeq \max_{\vu_i \in \cX_i} \sum_{t=1}^T \langle \nabla_{\vx_i} u_i(\vz\^t), \vu_i - \vz\^t_i\rangle \\
            &\le \frac{1}{\eta}\max_{\vu_i \in \cX_i} \mleft\{ \div[i]{\vu_i}{\vz\^0_i} - \div[i]{\vu_i}{\vz\^T_i}\mright\} - \frac{1}{\eta}\sum_{t=1}^T \div[i]{\vz\^t_i}{\vz\^{t-1}_i}\\
            &\le \frac{1}{\eta}\max_{\vu_i \in \cX_i} \mleft\{ \div[i]{\vu_i}{\vz\^0_i} \mright\}.
    \]
\end{corollary}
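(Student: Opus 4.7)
The statement is an immediate consequence of Lemma~\ref{lem:central}, so my plan is essentially just to sum the per-player inequality \eqref{eq:central per player} over $t = 1, \dots, T$ and then weaken the resulting bound using nonnegativity of the Bregman divergence. Since the heavy lifting has already been done in Lemma~\ref{lem:central}, I do not expect any real obstacle; the only thing to verify is that the telescoping cancellation goes through cleanly regardless of the choice of comparator $\vu_i$, which it does because $\vu_i$ appears identically at every time step on the right-hand side of \eqref{eq:central per player}.

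Concretely, I would fix an arbitrary player $i \in \range{n}$ and an arbitrary comparator $\vu_i \in \cX_i$. For each $t \in \{1, \dots, T\}$, the fixed-point hypothesis \eqref{eq:fp} lets me invoke \eqref{eq:central per player}, giving
\[
    \eta\,\langle \nabla_{\vx_i} u_i(\vz\^t), \vu_i - \vz\^t_i\rangle
        \;\le\; \div[i]{\vu_i}{\vz\^{t-1}_i} - \div[i]{\vu_i}{\vz\^t_i} - \div[i]{\vz\^t_i}{\vz\^{t-1}_i}.
\]
Summing over $t = 1, \dots, T$, the first two terms on the right telescope, leaving
\[
    \eta \sum_{t=1}^T \langle \nabla_{\vx_i} u_i(\vz\^t), \vu_i - \vz\^t_i\rangle
        \;\le\; \div[i]{\vu_i}{\vz\^0_i} - \div[i]{\vu_i}{\vz\^T_i} - \sum_{t=1}^T \div[i]{\vz\^t_i}{\vz\^{t-1}_i}.
\]
Dividing by $\eta > 0$ and taking the maximum over $\vu_i \in \cX_i$ on both sides (noting that the sum of Bregman terms on the right is independent of $\vu_i$ and can be pulled outside the $\max$) yields exactly the first displayed inequality of the corollary.

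For the second, simpler bound, I would just invoke the fact that Bregman divergences induced by a strongly convex regularizer are nonnegative, so $\div[i]{\vu_i}{\vz\^T_i} \ge 0$ and $\div[i]{\vz\^t_i}{\vz\^{t-1}_i} \ge 0$ for every $t$. Dropping the corresponding (nonpositive) terms from the right-hand side of the previous bound gives
\[
    \mathrm{Reg}^T_i \;\le\; \frac{1}{\eta}\max_{\vu_i \in \cX_i} \div[i]{\vu_i}{\vz\^0_i},
\]
which is the claimed constant (i.e., $T$-independent) regret bound.
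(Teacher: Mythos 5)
Your proof is correct and follows essentially the same route as the paper's: invoke \eqref{eq:central per player} for each $t$, telescope the sum, divide by $\eta$, take the maximum over $\vu_i$, and drop the nonnegative Bregman terms for the final bound. No gaps.
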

\begin{proof}
    The statement follows immediately from summing~\eqref{eq:central per player} for $t = 1,\dots, T$, and noticing that the terms $-\div[i]{\vu_i}{\vz\^t_i} + \div[i]{\vu_i}{\vz\^{t-1}_i}$ telescope:
    \[
        \eta \sum_{t=1}^T \langle \nabla_{\vx_i} u_i(\vz\^t), \vu_i - \vz\^t_i\rangle 
        &\overset{\eqref{eq:central per player}}{\le} \sum_{t=1}^T \mleft( -\div[i]{\vu_i}{\vz\^t_i} + \div[i]{\vu_i}{\vz\^{t-1}_i} - \div[i]{\vz\^t_i}{\vz\^{t-1}_i} \mright)\\
        &= \div[i]{\vu_i}{\vz\^0_i} - \div[i]{\vu_i}{\vz\^T_i} - \sum_{t=1}^T \div[i]{\vz\^t_i}{\vz\^{t-1}_i}
    \]
    for all $\vu_i \in \cX_i$. Dividing by $\eta$ and taking a maximum over $\vu_i\in\cX_i$ yields first inequality in the statement. The second inequality follows immediately by using the fact that divergences are always nonnegative.
\end{proof}

The algorithm defined in \cref{cor:central} is called the \emph{conceptual prox method (CPM)} (see also \citet{Nemirovski04:Prox}).
\cref{cor:central} shows that the per-player regret cumulated up to any time $T$ by the fixed-point iterates $\vz\^t = \prox{\vz\^{t-1}}{F(\vz\^t)}$ produced by the CPM is bounded by the range of the divergence $\div[i]{\cdot}{\vz\^0_i}$, a quantity independent of time. For example, when $\cX_i = \Delta^m$ is the $m$-simplex, $\vz\^0_i$ is the uniform strategy, and $\reg_i$ is negative entropy, then $\max_{\vx\in\cX_i} \div[i]{\vx}{\vz\^0_i} = \log m$.

\paragraph{Existence and Computation of Fixed-Point Solutions} At this stage, it is perhaps unclear why the fixed points~\eqref{eq:fp} exist and how one can compute them. The key lies in the following observation, which dates back to at least the work of \citet{Nemirovski04:Prox}:

\begin{observation}\label{obs:contraction}
At all times $t$ the map
\[
    \vw \mapsto \prox{\vz\^{t-1}}{\eta F(\vw)}
\]
is $\eta L$-Lipschitz continuous. Hence, as long as $\eta < 1/L$, the above function is a contraction, and the fixed point is therefore unique. Consequently, convergence to an $\epsilon$-fixed point can be achieved via a number of iterations that scales proportially to $\log (1/\epsilon)$.
\end{observation}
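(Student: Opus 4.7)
The plan is to prove the three claims of the observation in sequence, each being an immediate consequence of tools already assembled in the excerpt. For the first claim---$\eta L$-Lipschitz continuity of the map $\vw \mapsto \prox{\vz\^{t-1}}{\eta F(\vw)}$---I would compose \cref{lem:prox is lipschitz} with the Lipschitz continuity of $F$ from \cref{assumption:smooth}. Concretely, for any $\vw, \vw' \in \cZ$, applying \cref{lem:prox is lipschitz} at the fixed center $\vz\^{t-1}$ with gradient arguments $\eta F(\vw)$ and $\eta F(\vw')$ bounds the distance between the outputs by $\|\eta F(\vw) - \eta F(\vw')\|_* = \eta \|F(\vw) - F(\vw')\|_*$, which is in turn at most $\eta L \|\vw - \vw'\|$ by \cref{assumption:smooth}.

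For the existence and uniqueness of the fixed point, I would observe that $\cZ = \cX_1 \times \dots \times \cX_n$ is a nonempty, closed, bounded subset of $\R^d$ (being a finite product of convex compact sets), and hence a complete metric space under $\|\cdot\|$. The map in question is a self-map of $\cZ$, since the prox operator ranges in $\cZ$ by its very definition~\eqref{eq:prox def}. Given $\eta L < 1$, the map is a strict contraction on a complete metric space, so Banach's fixed-point theorem immediately delivers the unique fixed point $\vz\^t$ satisfying~\eqref{eq:fp}.

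For the final claim on computational cost, I would again appeal to Banach's theorem, which furnishes the standard geometric rate of convergence: iterating the contraction from any starting point $\vw_0 \in \cZ$, the sequence $\vw_{k+1} \defeq \prox{\vz\^{t-1}}{\eta F(\vw_k)}$ satisfies
\[
    \|\vw_k - \vz\^t\| \le (\eta L)^k \|\vw_0 - \vz\^t\| \le (\eta L)^k D,
\]
where $D$ denotes the diameter of $\cZ$ under $\|\cdot\|$. To guarantee $\|\vw_k - \vz\^t\| \le \epsilon$ it therefore suffices to take $k = \lceil \log(D/\epsilon) / \log(1/(\eta L))\rceil$ iterations, which is $O(\log(1/\epsilon))$ as claimed.

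I do not anticipate any real obstacle: the observation is essentially bookkeeping on top of \cref{lem:prox is lipschitz}, \cref{assumption:smooth}, and the textbook Banach fixed-point theorem. The only point requiring a moment of care is keeping the primal-dual norm pair consistent across the inequalities in the Lipschitz chain---but this is precisely the alignment enforced by \cref{assumption:sc} and highlighted in the remark attached to \cref{lem:prox is lipschitz}.
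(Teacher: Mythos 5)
Your proposal is correct and follows exactly the paper's argument: the map is the composition of the $1$-Lipschitz prox operator (\cref{lem:prox is lipschitz}) with the $\eta L$-Lipschitz map $\eta F$ (\cref{assumption:smooth}), hence $\eta L$-Lipschitz, and the remaining claims are the standard Banach fixed-point consequences. The paper leaves the Banach details implicit; you simply spell them out, which changes nothing of substance.
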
 

This is straightforward: the proximal operator itself is $1$-Lipschitz continuous (\cref{lem:prox is lipschitz}), and $\eta F$ is $\eta L$-Lipschitz continuous, so their composition is $\eta L$-Lipschitz continuous. 
So, at least \textit{approximate} fixed-point solutions are easy to compute, showing that at least in an approximate sense the conceptual prox method can be computed. In the next section we quantify the error introduced by the approximation in the fixed-point solution.

\section{Conceptual Prox Method with Approximate Fixed Points}
A more refined analysis of the argument employed in \cref{cor:central} takes into account error in the computation of the fixed-point solutions~\eqref{eq:fp}. We start by relaxing \cref{lem:central}.

\begin{lemma}\label{lem:central apx}
    Let $t \in \N$ and $\vz\^{t-1}\in \cZ$ be arbitrary. Let $\vw\^t \in \cZ$ be an approximate fixed point, in the sense that for some $\epsilon\^t \ge 0$,
    \[
        \mleft\| \vw\^t - \prox{\vz\^{t-1}}{\eta F(\vw\^t)} \mright\| \le \epsilon\^t.
        \numberthis{eq:apx}
    \]
    Then, the point $\vz\^t \defeq \prox{\vz\^{t-1}}{\eta F(\vw\^t)}$ satisfies
    \[
        \eta\,\langle F(\vw\^t), \vu - \vw\^t \rangle \ge \div{\vu}{\vz\^t} - \div{\vu}{\vz\^{t-1}} + \div{\vz\^t}{\vz\^{t-1}} - \eta B \epsilon\^t \qquad \forall \vu \in \cZ,
        \numberthis{eq:central global apx}
    \]
    and in particular, for all players $i$,
    \[
        \eta\,\langle \nabla_{\vx_i} u_i(\vw\^t), \vu_i - \vw\^t_i\rangle \le -\div[i]{\vu_i}{\vz\^t_i} + \div[i]{\vu_i}{\vz\^{t-1}_i} - \div[i]{\vz\^t_i}{\vz\^{t-1}_i} + \eta B \epsilon\^t \qquad\forall \vu_i \in \cX_i.
        \numberthis{eq:central per player apx}
    \]
\end{lemma}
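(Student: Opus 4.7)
The proof mirrors that of \cref{lem:central}, with the additional bookkeeping of the error that appears because $\vw\^t$ is only an approximate, rather than exact, fixed point of $\vw \mapsto \prox{\vz\^{t-1}}{\eta F(\vw)}$.

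For~\eqref{eq:central global apx}, I would first apply \cref{cor:vi for prox} with $\vg = \eta F(\vw\^t)$ and $\vz = \vz\^{t-1}$ to the point $\vz\^t = \prox{\vz\^{t-1}}{\eta F(\vw\^t)}$, which directly yields
\[
    \div{\vu}{\vz\^t} - \div{\vu}{\vz\^{t-1}} + \div{\vz\^t}{\vz\^{t-1}} \le \eta \langle F(\vw\^t), \vu - \vz\^t\rangle
\]
for every $\vu \in \cZ$. To convert the right-hand side into one involving $\vu - \vw\^t$, I would decompose $\vu - \vz\^t = (\vu - \vw\^t) + (\vw\^t - \vz\^t)$ and apply H\"older's inequality: combining the boundedness $\|F(\vw\^t)\|_* \le B$ from \cref{assumption:smooth} with the approximate fixed-point hypothesis~\eqref{eq:apx} gives $|\langle F(\vw\^t), \vw\^t - \vz\^t\rangle| \le B \epsilon\^t$. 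Substituting and rearranging then yields exactly~\eqref{eq:central global apx}.

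For~\eqref{eq:central per player apx}, the key observation is that the prox operator decomposes block-wise across players, because $\cZ = \cX_1 \times \cdots \times \cX_n$ and $\reg = \reg_1 + \cdots + \reg_n$ are both product-separable. Consequently $\vz\^t_i$ is itself the image of a per-player proximal step in $\cX_i$ with center $\vz\^{t-1}_i$ and gradient $-\eta \nabla_{\vx_i} u_i(\vw\^t)$. Reading \cref{cor:vi for prox} at the level of player $i$---with $\reg$, $\cZ$, and $\vz\^{t-1}$ replaced by $\reg_i$, $\cX_i$, and $\vz\^{t-1}_i$ respectively---gives
\[
    \div[i]{\vu_i}{\vz\^t_i} - \div[i]{\vu_i}{\vz\^{t-1}_i} + \div[i]{\vz\^t_i}{\vz\^{t-1}_i} \le -\eta \langle \nabla_{\vx_i} u_i(\vw\^t), \vu_i - \vz\^t_i\rangle
\]
for every $\vu_i \in \cX_i$. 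I then bridge from $\vu_i - \vz\^t_i$ to $\vu_i - \vw\^t_i$ by writing $\vu_i - \vz\^t_i = (\vu_i - \vw\^t_i) + (\vw\^t_i - \vz\^t_i)$ and applying H\"older's inequality at the block level, bounding the cross term $|\langle \nabla_{\vx_i} u_i(\vw\^t), \vw\^t_i - \vz\^t_i\rangle| \le B\epsilon\^t$ using the per-player components of \cref{assumption:smooth} together with~\eqref{eq:apx}. Rearranging recovers~\eqref{eq:central per player apx}.

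The main obstacle is the block-level H\"older step above: it tacitly requires that the primal-dual norm pair of \cref{assumption:smooth} behave well under restriction to a single player, in the sense that $\|\vx_i\|_i \le \|\vx\|$ and $\|\vg_i\|_{*,i} \le \|\vg\|_*$ on the block of player $i$. This is the standard block-decomposable structure satisfied by the concrete norms used throughout the paper (e.g., $\|\cdot\|_\Delta$ in \cref{sec:nf}) and is implicit in the choice of "appropriate" norm in \cref{assumption:smooth}; beyond this point, the argument is pure bookkeeping on top of \cref{lem:central}.
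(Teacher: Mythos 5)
Your proof of the global inequality \eqref{eq:central global apx} is exactly the paper's argument: apply \cref{cor:vi for prox} with $\vg = \eta F(\vw\^t)$ and center $\vz\^{t-1}$, split $\vu - \vz\^t = (\vu - \vw\^t) + (\vw\^t - \vz\^t)$, and absorb the cross term into $\eta B \epsilon\^t$ via H\"older together with \cref{assumption:smooth} and \eqref{eq:apx}. Where you diverge is the per-player inequality \eqref{eq:central per player apx}: the paper dispatches it with the remark that the argument is ``identical to that of \cref{lem:central}'', i.e.\ by substituting a suitably chosen $\vu$ into the global bound, whereas you re-run the whole argument at the level of a single block, using the product-separability of $\reg$ and $\cZ$ to view $\vz\^t_i$ as a per-player prox step and then applying a block-level H\"older inequality. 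Both routes are legitimate, and each carries a small technical debt: yours requires the norm compatibility $\|\vg_i\|_{*,i} \le \|\vg\|_*$ and $\|\vx_i\|_i \le \|\vx\|$, which you correctly flag and which holds for the $\ell_2$-combination norms the paper actually uses (e.g.\ $\|\cdot\|_\Delta$ in \cref{sec:nf}); the paper's substitution route, taken literally with $\vu = (\vw\^t_1, \dots, \vu_i, \dots, \vw\^t_n)$, leaves residual divergence terms $\div[j]{\vw\^t_j}{\vz\^t_j} - \div[j]{\vw\^t_j}{\vz\^{t-1}_j} + \div[j]{\vz\^t_j}{\vz\^{t-1}_j}$ for $j \neq i$ that, unlike in the exact-fixed-point case of \cref{lem:central}, do not vanish and are not obviously nonnegative. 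In that sense your block-wise derivation is arguably the more careful of the two, at the price of making the (mild, implicitly assumed) block structure of the norm explicit.
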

\begin{proof}
    The proof of the second part of the statement is identical to that of \cref{lem:central}. Hence, we focus on proving~\eqref{eq:central global apx}. Again, we start from \cref{cor:vi for prox}, this time applied with 
    $\vg = \eta F(\vw\^{t})$ and $\vz = \vz\^{t-1}$:
    \[
        \div{\vu}{\vz\^t} - \div{\vu}{\vz\^{t-1}} + \div{\vz\^t}{\vz\^{t-1}} &\le \eta\,\langle F(\vw\^{t}), \vu - \vz\^t\rangle\\
        &= \eta\,\langle F(\vw\^{t}), \vu - \vw\^t\rangle + \eta\,\langle F(\vw\^{t}), \vw\^t - \vz\^t \rangle\\
        &\le \eta\,\langle F(\vw\^{t}), \vu - \vw\^t\rangle + \eta \|F(\vw\^{t})\|_* \| \vw\^t - \vz\^t \| \\
        &\le \eta\,\langle F(\vw\^{t}), \vu - \vw\^t\rangle + \eta B \epsilon\^t,
    \]
    where the second inequality follows from the definition of the dual norm, and the last inequality from using the definition of $B$, introduced in \cref{assumption:smooth}, and~\eqref{eq:apx}. Rearranging yields~\eqref{eq:central global apx}.
\end{proof}

Repeating the analysis we already carried out for~\cref{cor:central}, this time using~\cref{lem:central apx}, we obtain the following.

\begin{corollary}\label{cor:central with error}
    Let $\vz\^0 \in \cZ$ be arbitrary, and recursively let $\vz\^t, \vw\^t \in \cZ$ at all times $t \in \N$ be such that
    \[
        \mleft\| \vw\^t - \prox{\vz\^{t-1}}{\eta F(\vw\^t)} \mright\| \le \epsilon\^t, \qquad 
        \vz\^t \defeq \prox{\vz\^{t-1}}{\eta F(\vw\^t)}.\numberthis[$\star\star$]{eq:fp apx}
    \]
    Then, at all times $T \in \N$, the per-player regret associated with iterates $\vw\^t$ is upper bounded as
    \[
        \mathrm{Reg}^T_i &\defeq \max_{\vu_i \in \cX_i} \sum_{t=1}^T \langle \nabla_{\vx_i} u_i(\vw\^t), \vu_i - \vw\^t_i\rangle \\
            &\le \frac{1}{\eta}\max_{\vu_i \in \cX_i} \mleft\{ \div[i]{\vu_i}{\vz\^0_i} - \div[i]{\vu_i}{\vz\^T_i}\mright\} - \frac{1}{\eta}\sum_{t=1}^T \div[i]{\vz\^t_i}{\vz\^{t-1}_i} + B \sum_{t=1}^T \epsilon\^t\\
            &\le \frac{1}{\eta}\max_{\vu_i \in \cX_i} \mleft\{ \div[i]{\vu_i}{\vz\^0_i} \mright\} + B\sum_{t=1}^T \epsilon\^t.
    \]
\end{corollary}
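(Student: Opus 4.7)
The plan is to imitate the proof of \cref{cor:central} almost verbatim, but using the approximate per-player inequality~\eqref{eq:central per player apx} from \cref{lem:central apx} in place of the exact one~\eqref{eq:central per player}. The only substantive change is bookkeeping of the additional $\eta B \epsilon\^t$ error incurred at each step, which by linearity of summation produces the extra $B\sum_{t=1}^T \epsilon\^t$ term in the stated bound.

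Concretely, I would first fix a player $i\in\range n$ and an arbitrary $\vu_i \in \cX_i$, and apply~\eqref{eq:central per player apx} at every time $t = 1, \dots, T$ to get
\[
    \eta\,\langle \nabla_{\vx_i} u_i(\vw\^t), \vu_i - \vw\^t_i\rangle \le -\div[i]{\vu_i}{\vz\^t_i} + \div[i]{\vu_i}{\vz\^{t-1}_i} - \div[i]{\vz\^t_i}{\vz\^{t-1}_i} + \eta B \epsilon\^t.
\]
Summing these $T$ inequalities, the consecutive terms $-\div[i]{\vu_i}{\vz\^t_i} + \div[i]{\vu_i}{\vz\^{t-1}_i}$ telescope, yielding
\[
    \eta \sum_{t=1}^T \langle \nabla_{\vx_i} u_i(\vw\^t), \vu_i - \vw\^t_i\rangle \le \div[i]{\vu_i}{\vz\^0_i} - \div[i]{\vu_i}{\vz\^T_i} - \sum_{t=1}^T \div[i]{\vz\^t_i}{\vz\^{t-1}_i} + \eta B \sum_{t=1}^T \epsilon\^t.
\]
Dividing through by $\eta$ and taking the maximum over $\vu_i \in \cX_i$ on both sides (the right-hand side depending on $\vu_i$ only through the divergence difference) then gives the first inequality in the statement. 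The second inequality follows immediately by dropping the nonnegative terms $\div[i]{\vu_i}{\vz\^T_i}$ and $\sum_{t=1}^T \div[i]{\vz\^t_i}{\vz\^{t-1}_i}$, using nonnegativity of the Bregman divergence.

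There is essentially no technical obstacle beyond carefully tracking the accumulated slack: the key insight (the application of \cref{cor:vi for prox} to an approximate fixed point, and the bound $\eta\|F(\vw\^t)\|_* \|\vw\^t - \vz\^t\| \le \eta B \epsilon\^t$) has already been absorbed into \cref{lem:central apx}, so the corollary is a purely mechanical consequence of telescoping and maximizing.
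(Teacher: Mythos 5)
Your proof is correct and matches the paper's approach exactly: the paper gives no explicit proof of this corollary, stating only that it follows by ``repeating the analysis we already carried out for \cref{cor:central}, this time using \cref{lem:central apx},'' which is precisely the telescoping-plus-error-accumulation argument you spell out.
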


\begin{observation}\label{obs:epsilon}
When the choice $\epsilon\^t = 1/t^2$ is used, then the sum of errors $\sum_{t=1}^T \epsilon\^t$ is an additive constant bounded by $2$, and therefore does not affect the constant per-player regret guarantees, while at the same time requiring $O(\log t)$ fixed-point iterations per iteration of the learning algorithm.    
\end{observation}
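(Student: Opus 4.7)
The plan is to verify the two clauses of the observation separately; both are essentially immediate from material already in hand.

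For the sum-of-errors claim, it suffices to note that the partial sums $\sum_{t=1}^T 1/t^2$ are bounded uniformly in $T$. One can either invoke the Basel series $\sum_{t=1}^\infty 1/t^2 = \pi^2/6 < 2$, or avoid the closed form by the integral comparison
$$\sum_{t=1}^T \frac{1}{t^2} \;\le\; 1 + \int_1^T \frac{dx}{x^2} \;=\; 2 - \frac{1}{T} \;<\; 2.$$
Substituting this uniform bound into \cref{cor:central with error}, the total error contribution is at most $2B$ — an additive constant independent of $T$ — so the per-player regret remains bounded by the time-independent quantity $\tfrac{1}{\eta}\max_{\vu_i}\div[i]{\vu_i}{\vz\^0_i} + 2B$.

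For the fixed-point iteration count, the plan is to invoke \cref{obs:contraction}, which establishes that the map $T_t : \vw \mapsto \prox{\vz\^{t-1}}{\eta F(\vw)}$ is a contraction with modulus $\eta L < 1$. Letting $\vw^*_t$ denote its unique fixed point and $D$ be any upper bound on the diameter of $\cZ$ under $\|\cdot\|$ (which is finite since $\cZ$ is compact), iterating the contraction from an arbitrary starting point $\vw_0 \in \cZ$ produces after $k$ steps an iterate $\vw_k$ satisfying
$$\|\vw_k - \vw^*_t\| \;\le\; (\eta L)^k \|\vw_0 - \vw^*_t\| \;\le\; (\eta L)^k D.$$
The residual actually measured in condition~\eqref{eq:apx} is $\|\vw_k - T_t(\vw_k)\|$; by the triangle inequality and the $\eta L$-Lipschitzness of $T_t$, this residual is at most $(1+\eta L)\|\vw_k - \vw^*_t\| \le 2(\eta L)^k D$. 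Requiring this residual to be at most $1/t^2$ then reduces to
$$k \;\ge\; \frac{\log(2 D t^2)}{\log(1/(\eta L))} \;=\; O(\log t),$$
as claimed.

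The only mild subtlety is the distinction between the distance to the true fixed point, $\|\vw_k - \vw^*_t\|$, and the residual that appears in~\eqref{eq:apx}. As the computation above shows, these differ by only a factor of $1+\eta L$, which is swallowed into the constant inside the $O(\cdot)$, so no real obstacle arises. Everything else is bookkeeping.
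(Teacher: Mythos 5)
Your proposal is correct and follows essentially the argument the paper intends (the observation is stated without an explicit proof, resting on \cref{cor:central with error} and \cref{obs:contraction}): the integral comparison $\sum_{t=1}^T t^{-2} \le 2 - 1/T$ handles the error sum, and the contraction argument gives the $O(\log t)$ iteration count. The only cosmetic difference is that you bound the residual via the distance to the fixed point (picking up a harmless $(1+\eta L)$ factor), whereas the paper's \cref{thm:centralized} implicitly uses that the residual $\|\vw_k - \prox{\vz\^{t-1}}{\eta F(\vw_k)}\|$ itself contracts by $\eta L \le 1/2$ per step, yielding the cleaner bound $\log_2(\max_{\vz,\vz'}\|\vz-\vz'\|) + \log_2(1/\epsilon\^t)$ directly.
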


\subsection{Centralized Implementation}

By combining \cref{cor:central with error} together with the concrete choice of errors $\epsilon\^t$ given in \cref{obs:epsilon} and fixed-point iterations on the map $\Pi_{\vz\^{t-1}} \circ \eta F$, which is a contraction for $\eta \le 1/(2L)$ (\cref{obs:contraction}), we obtain \cref{algo:centralized}, whose properties summarized in \cref{thm:centralized} follow directly from the preceding discussion.

\begin{algorithm}[ht]
    \caption{Conceptual prox method with approximate fixed points (\textit{centralized} implementation)}
    \label{algo:centralized}
    \DontPrintSemicolon
    \KwData{$\vz\^0 \in \cZ$ initial point, $0 < \eta \le 1/(2L)$ learning rate, $\epsilon\^t$ desired fixed-point approximation error\!\!\!\!\!\!}
    \BlankLine
    \For{$t=1,2,\dots$}{
        $\vw\^t \gets \vz\^{t-1}$\;
        \While{$\displaystyle\mleft\|\vw\^t - \prox{\vz\^{t-1}}{\eta F(\vw\^t)}\mright\| > \epsilon\^t$}{\vspace{2mm}
            $\vw\^t \gets \prox{\vz\^{t-1}}{\eta F(\vw\^t)}$
            \label{line:prox}
            \Comment*{\color{commentcolor}Fixed-point iteration]\!\!\!\!}
        }
        \vspace{2mm}
        $\vz\^t \gets \prox{\vz\^{t-1}}{\eta F(\vw\^t)}$\;
    }
\end{algorithm}

\begin{theorem}\label{thm:centralized}
    At all times $t = 1,2, \dots$ in \cref{algo:centralized},
    \begin{enumerate}
        \item The internal {\normalfont\textbf{while}} loop runs for at most $\log_2(\max_{\vz,\vz'\in\cZ}\|\vz - \vz'\|) + \log_2 \frac{1}{\epsilon\^t}$ iterations.
        \item The iterates $\vz\^t$ produced by the algorithm achieve regret upper bounded by
        \[
            \mathrm{Reg}_i^t \defeq \max_{\vu_i \in \cX_i} \sum_{\tau=1}^t \langle \nabla_{\vx_i} u_i(\vw\^\tau), \vu_i - \vw\^\tau_i\rangle\le \frac{1}{\eta} \max_{\vu_i \in \cX_i} \mleft\{ \div[i]{\vu_i}{\vz\^0} \mright\} + B \sum_{\tau=1}^t \epsilon\^\tau
        \]
        for each player $i$. Correspondingly, the average product distribution of play $\frac{1}{t} \sum_{\tau = 1}^t \vz_1\^\tau \otimes \dots \otimes \vz_n\^\tau$
        is a $\kappa$-coarse correlated equilibrium of the game, with
        \[
            \kappa = \frac{\mathrm{Reg}_i^t}{t} \le \frac{1}{\eta\,t} \max_{\vu_i \in \cX_i} \mleft\{ \div[i]{\vu_i}{\vz\^0} \mright\} + \frac{B}{t}\sum_{\tau=1}^t \epsilon\^\tau
        \]
    \end{enumerate}
\end{theorem}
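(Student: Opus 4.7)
The theorem has two parts: a bound on the number of inner fixed-point iterations, and a bound on the outer per-player regret (which immediately yields the CCE approximation via a standard reduction). Both parts are essentially direct assemblies of what has already been established: part (1) is a Banach-contraction computation based on \cref{obs:contraction}, and part (2) reads off from \cref{cor:central with error} once we check that \cref{algo:centralized} respects its hypothesis.

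For part (1), I would define $\Psi(\vw) \defeq \prox{\vz\^{t-1}}{\eta F(\vw)}$ to be the map iterated by the inner loop, and let $\vw_k$ denote the value of $\vw\^t$ after $k$ executions of the loop body (with $\vw_0 = \vz\^{t-1}$). By \cref{obs:contraction}, $\Psi$ is $\eta L$-Lipschitz, and the hypothesis $\eta \le 1/(2L)$ gives contraction factor at most $1/2$. A straightforward induction then shows
\[
\|\vw_k - \Psi(\vw_k)\| \;=\; \|\vw_k - \vw_{k+1}\| \;=\; \|\Psi(\vw_{k-1}) - \Psi(\vw_k)\| \;\le\; \tfrac{1}{2}\,\|\vw_{k-1} - \vw_k\| \;\le\; \bigl(\tfrac{1}{2}\bigr)^k\,\|\vw_0 - \vw_1\|.
\]
Since $\Psi$ maps $\cZ$ into itself, both $\vw_0$ and $\vw_1$ lie in $\cZ$, hence $\|\vw_0 - \vw_1\| \le \max_{\vz,\vz'\in\cZ}\|\vz-\vz'\|$. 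Solving $(1/2)^k\max_{\vz,\vz'}\|\vz-\vz'\| \le \epsilon\^t$ for $k$ gives the stated bound $k \le \log_2(\max_{\vz,\vz'\in\cZ}\|\vz-\vz'\|) + \log_2(1/\epsilon\^t)$.

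For part (2), the regret bound is essentially a one-line invocation. When the \textbf{while} loop exits, the current iterate $\vw\^t$ satisfies $\|\vw\^t - \prox{\vz\^{t-1}}{\eta F(\vw\^t)}\| \le \epsilon\^t$, and the subsequent assignment is exactly $\vz\^t \defeq \prox{\vz\^{t-1}}{\eta F(\vw\^t)}$. These are precisely the hypotheses~\eqref{eq:fp apx} of \cref{cor:central with error}, so applying that corollary (and discarding the nonpositive telescopic divergence term) yields the stated upper bound on $\mathrm{Reg}_i^t$ in terms of $\div[i]{\vu_i}{\vz\^0}$ plus the accumulated inner-loop errors.

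The CCE approximation is then the standard external-regret-to-CCE reduction. By concavity of $u_i$ in its own argument, for any deviation $\vu_i \in \cX_i$ one has $u_i(\vu_i, \vw_{-i}\^\tau) - u_i(\vw\^\tau) \le \langle \nabla_{\vx_i} u_i(\vw\^\tau), \vu_i - \vw_i\^\tau\rangle$; summing over $\tau = 1, \dots, t$ and dividing by $t$, the average utility gain that player $i$ could obtain by unilaterally deviating from the empirical product distribution of play is bounded by $\mathrm{Reg}_i^t/t$, which is exactly the value of $\kappa$ claimed. No step of this plan is particularly delicate; the only detail worth verifying carefully is the reduction of $\|\vw_0 - \vw_1\|$ to the diameter of $\cZ$, which is immediate because the prox operator always returns a point in $\cZ$.
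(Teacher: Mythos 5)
Your proposal is correct and takes essentially the same route as the paper, which presents \cref{thm:centralized} as following "directly from the preceding discussion": part (1) is exactly the contraction argument of \cref{obs:contraction} made quantitative, and part (2) is a direct invocation of \cref{cor:central with error} followed by the standard concavity-based regret-to-CCE reduction. No gaps.
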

Once again, we remark that the choice $\epsilon\^t = 1/t^2$ for all $t$ is natural, and leads to constant per-player regret, as well as convergence to a coarse correlated equilibrium of the convex game at the rate $1/t$. Alternatively, if the total number of iterations $T$ was known in advance, the choice $\epsilon\^t = 1/T$ would lead to a similar result.

We refer to \cref{algo:centralized} as a \emph{centralized} implementation because it operates directly on the product space $\cZ$. We believe that this is the natural setting in which to analyze the algorithm. However, as we lay out in the next section, the individual steps that make \cref{algo:centralized} can be expanded to have a learning dynamic flavor for each player, with some important caveats.

\subsection{Decentralized Implementation: Clairvoyant OMD}

In this section we show that \cref{algo:centralized} can be implemented in the form of a decentralized learning algorithm where each player independently updates their strategy upon observing the gradient of their utility. The key is in the observation that, given the definition of $\reg = \reg_1 + \dots + \reg_n$, the proximal operator on \cref{line:prox} decomposes as
\[
\prox{\vz\^{t-1}}{\eta F(\vw\^t)} = \begin{pmatrix}
    \displaystyle\argmin_{\vx_1\in\cX_1} \mleft\{ -\eta \langle \nabla_{\vx_1} u_1(\vw\^t), \vx_1\rangle + \div[1]{\vx_1}{\vz\^{t-1}_1}\mright\}\\
    \vdots \\
    \displaystyle\argmin_{\vx_n\in\cX_n} \mleft\{-\eta \langle \nabla_{\vx_n} u_n(\vw\^t), \vx_n\rangle + \div[n]{\vx_n}{\vz\^{t-1}_n}\mright\}\\
\end{pmatrix},
\]
which corresponds to an OMD update (with linearized losses) for each player. Then, by fixing the number of fixed-point iterations (that is, repetitions of the \textbf{while} loop) to the quantity
\[
    N\^t \defeq 1 + \log_2\mleft(\max_{\vz, \vz'\in \cZ} \|\vz - \vz'\|\mright) + \log_2 \frac{1}{\epsilon\^t},
\]
it becomes guaranteed that the approximation error of the approximate fixed-point generated after $N\^t - 1$ iterations is less than $\epsilon\^t$, and thus we can rewrite \cref{algo:centralized} as in \cref{algo:decentralized}.

\vspace{2mm}
\begin{algorithm}[H]
    \caption{Decentralized, per-player implementation (Clairvoyant OMD)}
    \label{algo:decentralized}
    \DontPrintSemicolon
    \KwData{$\vz_i\^0 \in \cX_i$ initial strategy for each player, $0 < \eta \le 1/(2L)$ learning rate, $\epsilon\^t$ desired fixed-point approximation error\!\!\!\!\!\!}
    \BlankLine
    $t' \gets 0$\;
    $\vw\^{t'} \gets \vz\^{0}$\;
    \For{$t=1,2,\dots$}{
            \Comment{\color{commentcolor} Begin unrolling of \textbf{while} loop on \cref{line:prox} of \cref{algo:centralized}]}
            \For{$k=1,\dots, N\^t$}{\vspace{2mm}
                \For{each player $i\in\range n$, in parallel}{\vspace{2mm}
                    $t' \gets t' + 1$\;
                    $\vw\^{t'}_i \gets \displaystyle\argmin_{\vx_i\in\cX_i} \mleft\{ -\eta \langle \nabla_{\vx_i} u_i(\vw\^{t'-1}_1, \dots, \vw\^{t'-1}_n), \vx_i\rangle + \div[i]{\vx_i}{\vz\^{t-1}_i}\mright\}$\label{line:omd}\;
                }
            }
            \Comment{\color{commentcolor}End unrolling of \textbf{while} loop on \cref{line:prox} of \cref{algo:centralized}]}
            \vspace{1mm}
            $\vz\^t \gets \vw\^{t'}$\;
        }
\end{algorithm}
\vspace{2mm}

As \cref{algo:decentralized} is amounts to an alternative implementation of \cref{thm:centralized}, the guarantees of \cref{thm:centralized} apply to \cref{algo:decentralized} as well. We also remark that in the latter implementation, each player $i\in\range n$ crucially updates their approximate fixed-point strategy $\vw_i\^{t'}$ using as gradient vector the gradient of their utility evaluated in the strategy profile $(\vw\^{t'-1}_1,\dots,\vw\^{t'-1}_n)$. This is perfectly consistent with the framework of learning in games, where each player updates their strategy based on their gradient at the current strategy profile.

One caveat with this implementation is that only the iterates $\vz\^t$ (and not the approximate fixed-point iterates $\vw\^{t'}$) are guaranteed to cumulate low regret. In other words, in the learning-in-games interpretation, only the subsequence of iterates $\{\vw\^{N\^1}, \vw\^{N\^1 + N\^2}, \dots\}$ is guaranteed to have low regret. Therefore, when interpreted as a learning algorithm, \cref{algo:decentralized} provides uncoupled learning dynamics, but is not a \emph{no-regret} algorithm in the classic sense. Nonetheless, a predictable subsequence of iterates guarantees constant regret, and therefore the algorithm can be used to extract an approximate coarse correlated equilibrium. 

The name \emph{Clairvoyant OMD} was chosen to explicitly point out that the algorithm is a generalization, to general convex games, of the Clairvoyant MWU algorithm recently introduced by \citet{Piliouras21:Optimal} in the special case of normal-form games and for certain specific choices of regularizers---see also \cref{sec:comd simplex}. We remark also that while Clairvoyant MWU was originally introduced as a \emph{centralized} algorithm, the latter decentralized interpretation was preferred in the later revision of the paper by the same authors \citep{Piliouras22:Beyond}. 

\section{The Special Case of Clairvoyant MWU}\label{sec:comd simplex}

In the special case of normal-form games where each player's simplex strategy space $\Delta^{d_i}$ has been equipped with the negative entropy regularizer (\cref{sec:nf}), the OMD-like update step on \cref{line:omd} of \cref{algo:decentralized} has the closed-form solution
\[
    \vw\^k_i[a] \propto \vz_i\^{t-1}[a]\cdot \exp\mleft\{\eta\cdot \nabla_{\vx_i} u_i(\vw_1\^{k-1}, \dots, \vw_n\^{k-1})[a]\mright\}
\]
By plugging the above closed formula into \cref{algo:centralized} and \cref{algo:decentralized}, we recover the centralized and decentralized versions of the Clairvoyant MWU algorithm introduced by \citet{Piliouras21:Optimal,Piliouras22:Beyond}.

As already mentioned in \cref{sec:nf}, in that setting the game operator $F$ is upper bounded (with respect to the dual norm $\|\cdot\|_{*\Delta}$) by $B$ and is $\lip$-Lipschitz continuous with respect to the $(\|\cdot\|_\Delta, \|\cdot\|_{*\Delta})$ norm pair, where $B = L = \sqrt{n}\,V$ and $V$ is the maximum absolute utility for any player in the game. Furthermore, negative entropy is $1$-strongly convex with respect to $\|\cdot\|_\Delta$. Hence, by choosing
\[
    \eta \defeq \frac{1}{2\sqrt{n}\, V},\qquad \epsilon\^t \defeq  \frac{1}{t^2},
\]
from \cref{thm:centralized} we recover a per-player regret of
\[
    \mathrm{Reg}_i^T \le 2\sqrt{n}\,V (1 + \log(d_i)) = O(\sqrt{n} V \log(d_i))
\]
for the iterates $\vz\^1, \dots, \vz\^T$ produced by \cref{algo:centralized,algo:decentralized}. %
The number of intermediate iterates $\vw\^{t'}$ produced by \cref{algo:decentralized} in this case is
\[
    \sum_{t=1}^T N\^t &= \sum_{t=1}^T \mleft( 1 + \log_2 \frac{1}{\epsilon\^t} + \log_2 \max_{\vz,\vz' \in \cZ} \|\vz - \vz'\|_\Delta\mright) = O(T \log T + T \log n),
\]
where the last step uses the fact that
\[
    \max_{\vz,\vz' \in \cZ} \|\vz - \vz'\|_\Delta = \sqrt{\sum_{i=1}^n \max_{\vx,\vx' \in \Delta^{d_i}}\|\vx - \vx'\|_1^2} \le 2\sqrt{n}.
\]
This shows that from $O(T \log T + T \log n)$ iterates of $\vw\^{t'}$ it is possible to extract a subsequence of $T$ iterates (those corresponding to $\vw\^{N\^1}, \vw\^{N\^1 + N\^2}, $ and so on), that cumulate $O(\sqrt{n} V \log(d_i))$ regret.
This regret bound refines that of \citet{Piliouras22:Beyond} in the dependence on the number of players ($\sqrt{n}$ rather than $n$).

\section*{Acknowledgments}
Haipeng Luo is partially supported by the NSF under award IIS-1943607.
Christian Kroer is supported by the Office of Naval Research Young Investigator Program under grant N00014-22-1-2530.

\bibliography{main}
\end{document}